\newtheorem{proposition}{Proposition}
\begin{document}

\title{{On the Performance of Multi-Antenna Wireless-Powered Communications with Energy Beamforming}
\thanks{Copyright (c) 2015 IEEE. Personal use of this material is permitted. However, permission to use this material for any other purposes must be obtained from the IEEE by sending a request to pubs-permissions@ieee.org.}
\thanks{This work was supported by the Australian Research Council (ARC) under
Grants DP150104019, FT120100487 and DP120100190. This work of H. Chen was supported by International Postgraduate Research Scholarship (IPRS), Australian Postgraduate Award (APA), and Norman I Price Supplementary scholarship.}
\thanks{The authors are with School of Electrical and Information Engineering, The University of Sydney, Sydney, NSW 2006, Australia (email: \{wenzhu.huang, he.chen, yonghui.li, branka.vucetic\}@sydney.edu.au). W. Huang and H. Chen have equal amount of technical contributions to this paper.}
\thanks{The corresponding author is H. Chen (email: he.chen@sydney.edu.au).}
}
\author{Wenzhu~Huang,~%\IEEEmembership{Student~Member,~IEEE,}
He~(Henry)~Chen,~\IEEEmembership{Student~Member,~IEEE,}
Yonghui~Li,~\IEEEmembership{Senior~Member,~IEEE,} and Branka~Vucetic,~\IEEEmembership{Fellow,~IEEE}
}

\maketitle

\begin{abstract}
In this paper, we study the average throughput performance of energy beamforming in a multi-antenna wireless-powered communication network (WPCN). The considered network consists of one hybrid access-point (AP) with multiple antennas and a single antenna user. The user does not have constant power supply and thus needs to harvest energy from the signals broadcast by the AP in the downlink (DL), before sending its data back to the AP with the harvested energy in the uplink (UL). We derive closed-form expressions of the average throughput and their asymptotic expressions at high SNR for both delay-limited and delay-tolerant transmission modes. The optimal DL energy harvesting time, which maximizes the system throughput, is then obtained for high SNR. All analytical expressions are validated by numerical simulations. The impact of various parameters, such as the AP transmit power, the energy harvesting time, and the number of antennas, on the system throughput is also investigated.
\end{abstract}

\begin{IEEEkeywords}
RF Energy harvesting, energy beamforming, wireless-powered communication network, performance analysis, throughput.
\end{IEEEkeywords}

\IEEEpeerreviewmaketitle

\section{Introduction}

Limited energy supply has been a serious problem faced by many wireless communications systems due to a dramatic increase of their energy consumption. Recharging and replacing batteries may introduce a high cost and sometimes could be inconvenient or impossible, as for example, in dangerous and toxic environments or for sensors underwater or inside the human body. In this context, energy harvesting techniques, which can prolong the lifetime of energy constrained wireless networks in a sustainable way, have emerged and attracted significant attentions, e.g., see \cite{Ozel_JSAC_2011} and references therein. The techniques enable wireless nodes to harvest energy from the surrounding environment. Apart from other techniques that harvests energy from solar, wind, vibration, thermoelectric effects or other physical phenomena, radio frequency (RF) energy harvesting, which typically refers to the capability of the wireless nodes to scavenge energy from RF signals, has emerged as an attractive solution recently \cite{Visser_POI_2013,Lu_2014_Resource}. The feasibility of RF energy harvesting has been experimentally demonstrated by hardware implementation in \cite{Liu_2013_SIGCOMM} and \cite{powercast}. Wireless-powered communication networks (WPCNs), a new type of wireless networks emerging along with the RF energy harvesting technique, refer to networks in which the wireless devices normally have no internal energy supply (e.g., battery) and need to harvest energy from the RF signals emitted from a (dedicated) power transmitter before their information transmission \cite{Zhang_ICC_Tutorial}. In practice, WPCNs can find many potential applications, such as wireless sensor and RFID networks \cite{Smith_book_wirelss}.

There have been several research papers in open literature, which studied WPCNs for different network setups. Single-user WPCNs were investigated in \cite {Chen_TVT_2014_Wireless,Chen_WCL_2014_Energy}. Taking into consideration limited feedback, Chen \emph{et al.} derived the optimal time duration for the downlink (DL) wireless energy transfer (WET) through maximizing upper and lower bounds on the average information transmission rate of one-user WPCNs \cite {Chen_TVT_2014_Wireless}. Furthermore, the impact of channel estimation error on the average information transmission rate was also analyzed. \cite{Chen_WCL_2014_Energy} optimized the energy efficiency of a single-user WPCN by jointly designing the time duration and transmit power for DL wireless energy transfer. Multi-user WPCNs were first investigated in \cite{Ju_TWC_2014}, in which a ``harvest-then-transmit" protocol was developed. In the proposed protocol, the users first collect energy from the signals broadcast by a single-antenna hybrid access-point (AP) in the DL and then use the harvested energy to send independent information to the hybrid AP in the uplink (UL), based on the time-division-multiple-access (TDMA) scheme. The sum-throughput of the considered network, subject to user fairness, was maximized through optimizing the time allocation for DL WET and UL wireless information transfer (WIT). A similar network setup with a multi-antenna AP was investigated in \cite{Liu_arXiv_Multi}, where multiple users can simultaneously transmit information to the AP in the UL through space-division-multiple-access after they harvest energy in the DL. In \cite{Liu_arXiv_Multi}, the minimum throughput of all users was maximized by jointly optimizing the time allocation, the DL energy beams, the UL transmit power allocation, and the receive beamforming vectors. %In addition, the full-duplex technique was employed to further improve the performance of multi-user WPCN in \cite{Ju_arXiv_Optimal,Kang_arXiv_Full}, in which the hybrid AP are equipped with two antennas for full-duplex transmission: one for broadcasting wireless energy to users in the DL and the other for receiving independent information from users via TDMA in the UL at the same time. Moreover, the massive MIMO technique was studied for a multi-user WPCN in \cite{Yang_arXiv_Throughput}, where the hybrid AP is assumed to be equipped with a large number of antennas.

In contrast to the above papers, which mainly focused on resource allocation in WPCNs, in this paper we concentrate on analyzing the average performance of one-user multi-antenna WPCNs, in order to gain important insights for designing and implementing multi-antenna WPCNs in practice. Specifically, we derive the exact closed-form expression for the average throughput performance of a multi-antenna WPCN with energy beamforming.
%For the purpose of exposition, as in \cite{Chen_TVT_2014_Wireless}, we consider a one-user WPCN that consists of a multi-antenna AP and a single-antenna user. We assume that the AP has full channel state information (CSI) of both DL and UL channels, which enables us to derive the exact closed-form expression for the average throughput performance of the considered system. This is different from \cite{Chen_TVT_2014_Wireless}, where only upper and lower bounds for the average performance of a similar network setup were provided due to the assumption that the CSI at the AP side is obtained via limited-feedback.
It is also worth mentioning that there are several papers that analyzed the average performance of the simultaneous wireless information and power transfer (SWIPT) for different network scenarios \cite{Nasir_TWC_2013,Michalopoulos_arXiv_2013,Ding_TWC_2014,Morsi_arXiv_2014}, which are related but are essentially different to this paper. These papers focused on the characterization of the fundamental trade-offs between energy transfer and information transmission using the same signal. Moreover, \cite{Chen_TSP_2015,Ju_arXiv_2014,He_ITW_2014,Gu_ICC_2015} focused on the development of cooperative protocols for WPCNs with different setups. Note that all nodes were assumed to be equipped with single antenna in these papers. To the authors' best knowledge, this is the first paper that derives exact average performance of a multi-antenna WPCN with energy beamforming.

In this paper, two different transmission modes are considered for the user, i.e., delay-limited and delay-tolerant\footnote{The delay-tolerant mode is named as no-delay-limited in \cite{Liu_TWC_2013_Wireless}.}, which corresponds to different length of the code-words used by the user \cite{Liu_TWC_2013_Wireless,Nasir_TWC_2013}. More specifically, in the delay-limited mode, the received signal at the AP has to be decoded block by block and the code length should be no longer than the time duration of each transmission block. Thus, the average throughput of this transmission mode can be obtained via evaluating the outage probability. In contrast, in the delay-tolerant mode, the AP can store the received information blocks in a buffer and tolerate the delay for decoding the stored signals together. In this case, the code length can be very long compared to the transmission block time. The egordic capacity is then used to calculate the system average throughput in this delay-tolerant mode.

The main contributions of this paper are summarized as follows:
\begin{itemize}
  %\item To analyze the system throughput performance, we derive new closed-form expressions for the cumulative distribution function (CDF) and probability density function (PDF) of the received signal-to-noise ratio (SNR) at the AP.
  \item We derive exact closed-form expressions for the average throughput of the considered network in both delay-limited and delay-tolerant transmission modes by evaluating the outage probability and ergodic capacity with a given energy harvesting time.
  \item Asymptotic analyses are performed to gain insights and derive closed-form expressions of the throughput-optimal energy harvesting time at high SNR for both transmission modes. Futhermore, numerical simulations are presented to provide practical design insights into the impact of various parameters, such as the transmit power of the AP, the energy harvesting time, and the number of antennas, on the system performance.
\end{itemize}

%\emph{(i)} To analyze the network performance, we derive new closed-form expressions for the cumulative distribution function (CDF) and probability density function (PDF) of the received signal-to-noise ratio (SNR) at the AP. \emph{(ii)} Considering a delay-tolerant transmission mode \cite{Nasir_TWC_2013}, we derive the exact closed-form expression for the average throughput of the considered network by evaluating the ergodic capacity for a given energy harvesting time. \emph{(iii)} Asymptotic analyses are performed to gain more insightful results and derive a closed-from expression for the throughput-optimal energy harvesting time at high SNR. Besides, numerical simulations are presented to provide practical design insights into the impact of various parameters, such as the transmit power of the AP, the energy harvesting time, and the number of antennas, on the system performance.

\textbf{\emph{Notations:}} ${\left\| {\textbf{x}} \right\|}$ denotes the Euclidean norm of a vector ${\textbf{x}}$. $F_X\left(x\right)$ and $f_X\left(x\right)$ are used to represent the cumulative distribution function (CDF) and probability density function (PDF) of a random variable $X$, respectively.

The rest of the paper is organized as follows. Section II describes the system model. The exact and asymptotic performance analysis of the considered system is provided in Section III. Section IV presents the numerical results from which various design insights are given. Finally, Section V concludes this paper.

%We derive important performance metrics for this wireless energy and information transfer system. We first derive a new closed form statistics c.d.f and p.d.f for the receive SNR, which is the essential factor to extend the further performance analysis. Then we offer the throughput analysis for this wireless energy and information transfer system into two modes in term of the applicability of the different length of the code words. One is based on delay-limited and the other is the delay tolerant. In delay limited mode, we first derive the new expression for the outage probability and then obtain asymptotic outage probability expression as well, which makes us to get the throughput and the asymptotic throughput expressions. On the other hand, in the delay limited mode, we provide the ergodic capacity in term of the p.d.f of the received SNR first. Moreover, in order to gain valuable insights into the system performance at the high SNR, we also offer the asymptotic ergodic capacity. Next, we obtain the throughput and asymptotic throughput expression in term of these ergodic capacity.

\section{System Model}
\begin{figure}
\centering
\scalebox{0.6}{\includegraphics{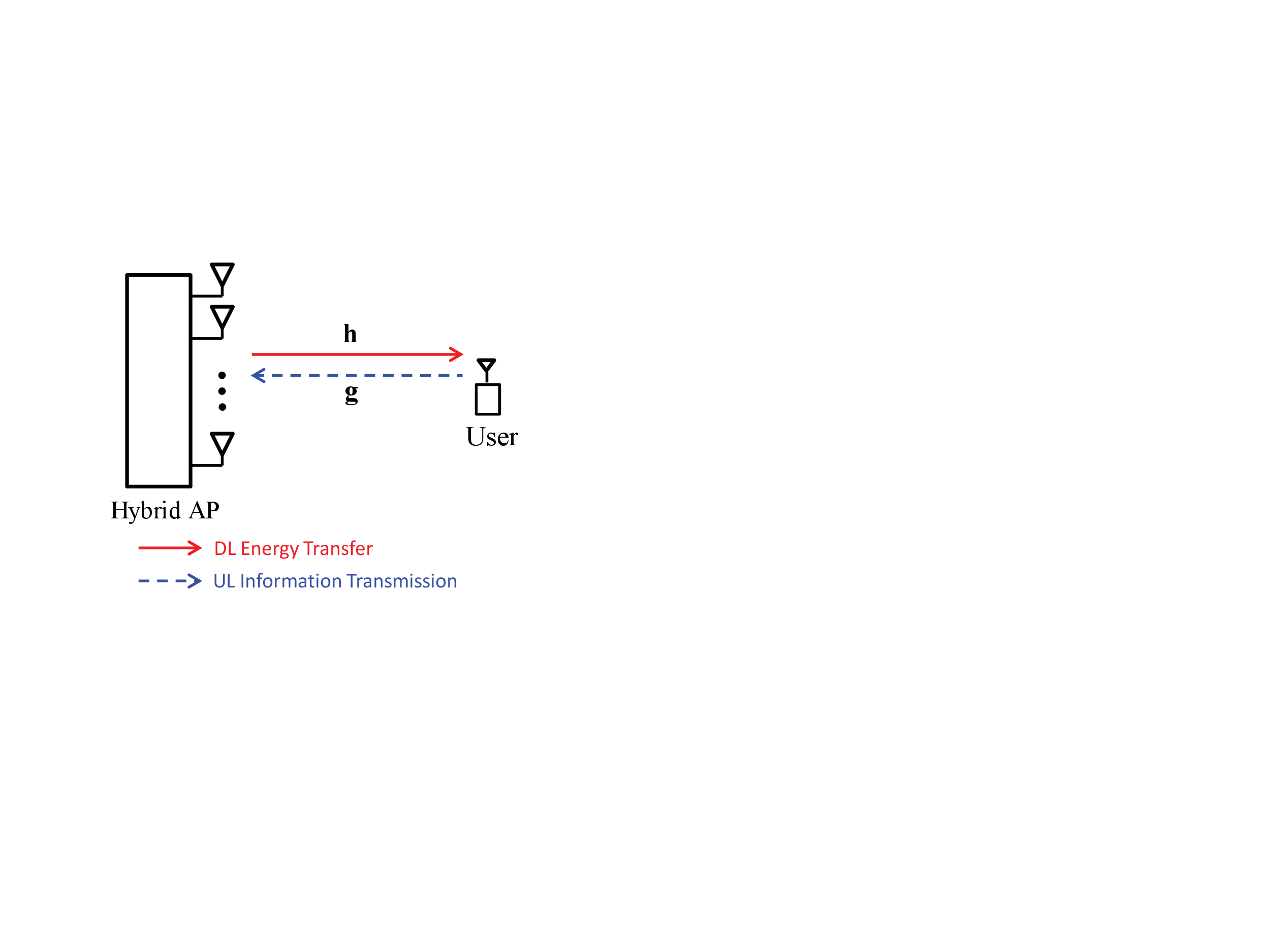}}
\caption{System model of a one-user multi-antenna WPCN.}
\label{fig:system_model}
\end{figure}

We study a single-user multi-antenna WPCN shown in Fig. \ref{fig:system_model}, with energy transfer in the downlink (DL) and information transmission in the uplink (UL). We follow \cite{Chen_TVT_2014_Wireless,Liu_arXiv_Multi} and consider that the AP is equipped with $N$ antennas but the user carries one single antenna\footnote{The extension to a general multiple-input multiple-output (MIMO) scenario with single/multiple user(s) \cite{Liu_arXiv_Multi,Zhang_conf_2014} would require the development of a new framework to analyze its system performance, which may constitute an independent paper and thus out of the scope of this paper.}. Such a network setup could correspond to a wireless sensor network in practice, where the user is a sensor node that cannot be equipped with multiple antennas due to the size and cost constraints. Both the AP and user work in a half-duplex mode. It is assumed that the user has no embedded energy supply and thus needs to first harvest energy from the signals broadcast by the AP in the DL and then use it for the information transmission to the AP in the UL. In our model, we consider a battery-free energy storage design at the user, which has been adopted in several recent works such as \cite{Chen_TVT_2014_Wireless,Ju_TWC_2014,Nasir_TWC_2013,Ding_TWC_2014}. Specifically, the user just uses a storage device such as a supercapacitor to keep the harvested energy among each transmission block.

The ``harvest-then-transmit" protocol proposed in \cite{Ju_TWC_2014} is assumed to implement in the considered network. Let $T$ denote the duration of one transmission block. The first interval of the duration $\tau T$ in each transmission block is used for the DL wireless energy transfer from the AP to the user and in the remaining interval of the duration $(1-\tau)T$, the user applies the harvested energy to transmit the information to the AP in the UL
%\footnote{Here, we consider that the energy harvesting fraction $\tau$ is fixed. The adaptive transmission scheme, where the value of $\tau$ can be different in different blocks, is regarded as our future work.}
. For simplicity but without loss of generality, we consider a normalized unit block time (i.e., $T = 1$) hereafter. In line with \cite{Chen_WCL_2014_Energy,Liu_arXiv_Multi}, we assume a quasi-static channel model with perfect channel state information (CSI) at the AP\footnote{In this paper, we consider the case that at the beginning of each transmission block, the AP first transfers a certain amount of energy to the user to acquire CSI. After this CSI acquisition duration, the AP delivers wireless energy to the user again, which will be used for UL information transmission. We assume that the channel is either static or varies very slowly, which is the case for wireless sensor networks, and thus the CSI acquisition duration is very short such that it can be ignored compared with the whole transmission block.}. In practice, the CSI can be acquired by various methods, e.g., the pilot-assisted reverse-link channel training \cite{Zeng_arXiv_2014_Zeng}. It is worth mentioning that it would be more general to consider the imperfect channel model as in \cite{Chen_TVT_2014_Wireless}. But this would require a totally different performance analysis framework, which may constitute an independent paper. Thus, we would like to consider this as our future work.

Let $\textbf{h}$ and $\textbf{g}$ represent the $N$ dimensional DL and UL channel vectors, respectively. Their entries are assumed to follow independent and identically distributed (i.i.d) circularly symmetric complex Gaussian distributions\footnote{Note that Rice fading is another practical model to describe the channel fading between the AP and user when the distance between them is limited and the light of sight exists. The performance analysis of the considered network over Rice fading has been treated as our future work.} with a zero mean and variance of $\Omega$. Let $P$ denote the transmit power from the AP during the DL phase. With the CSI $\textbf{h}$ available at the AP, the maximum ratio transmission (MRT) with beamforming vector $\textbf{w}^*= {{\textbf{h}}}/{{\left\| {\textbf{h}} \right\|}}$ is known to be optimal for DL energy transfer \cite{Chen_TVT_2014_Wireless}. By ignoring the negligible energy harvested from the noise, the amount of energy harvested by the terminal during the period $\tau $, denoted by $E_h$, is given by
\begin{equation}\label{eq:energy_harvested}
{E_h} = \eta \tau P {{\left\| {\textbf{h}} \right\|^2}},
\end{equation}
where $\eta$ is the energy conversion efficiency at the user.
%By ignoring the negligible energy harvested from the noise, the amount of energy harvested by the terminal during the period $\tau T$, denoted by $E_h$, is given by,
%\begin{equation} \label{eq1}
%{E_h} = \eta \tau TP{\left| {{\textbf{h}^H}\textbf{w}} \right|^2},  \\
%\end{equation}
%where $\eta$ is the user energy harvesting efficiency, and $\textbf{w} $ is the energy beamforming vector at the AP with unit norm. We assume that the AP has the channel state information (CSI) available. Thus, the optimal energy beamforming vector $\textbf{w}^*= {{\textbf{h}}}/{{\left\| {\textbf{h}} \right\|}}$, namely maximum ratio transmission (MRT), should be used to maximize the DL energy transmission \cite{Chen_TVT_2014_Wireless}. In this case, we can simplify the harvested energy ${E_h}$ as
%\begin{equation}\label{eq:energy_harvested}
%  {E_h} = \eta \tau P {{\left\| {\textbf{h}} \right\|^2}}.
%\end{equation}

After harvesting the energy from the AP, the user transmits the information to the AP during the remaining time $1 - \tau$. The received signal at the AP can be expressed as
\begin{equation} \label{eq2}
\textbf{y}_A = \sqrt {\frac{{{E_{h}}}}{{1 - \tau }}} \textbf{g}s + \textbf{n},
\end{equation}
where $s$ is the transmitted signal, and $\textbf{n}$ is the additive Gaussian white noise with a zero mean and variance matrix ${\sigma ^2}{{\bf I}_{{N}}}$. The maximum ratio combing (MRC) technique is assumed to be performed at the AP to maximize the received signal-to-noise ratio (SNR). The resulting SNR at the AP can be written as \cite{Chen_WCL_2014_Energy}
\begin{equation} \label{eq3}
\gamma_{A} = \frac{{{E_{h}} {{\left\| \textbf{g} \right\|}^2}}}{{\left( {1 - \tau } \right){\sigma ^2}}}
=\frac{\tau }{{1 - \tau }} {\bar \gamma} {{\left\| \textbf{h} \right\|}^2} {{\left\|\textbf{g} \right\|}^2},
\end{equation}
where $\bar \gamma  = \frac{{\eta  P }}{{{N_0}}}$.

%To simplify the notation, let $H = {\left\|\textbf{h} \right\|}^2$, $G = {\left\| \textbf{g} \right\|}^2 $, and $\bar \gamma  = \frac{{\eta  P }}{{{\sigma ^2}}}$, thus the SNR in (\ref{eq3}) can be re-written as
%\begin{equation} \label{eq4}
%\gamma_{A}=\frac{\tau }{{1 - \tau }} {\bar \gamma} {{\left\| \textbf{h} \right\|}^2} {{\left\|\textbf{g} \right\|}^2} .
%\end{equation}
%We assume all UL and DL channel links are independent.  In this paper, we consider two different transmission modes, named, delay-limited and and delay tolerant transmission modes, which refer to the applicability of different length of the code-words \cite{27}. In the delay-limited transmission mode, the AP decodes the received signal from the terminal. That is, the AP immediately decodes each block of information symbols. In the delay-tolerant mode, the AP can tolerate the delay in decoding the received signals.  It can buffer the received information blocks and decoder several blocks together.
%In the next section, we will analyze the performance of these two transmission modes and compare their performance.

\section{Performance Analysis}\label{sec:performance_analysis}

In this section, we analyze the average throughput performance of the considered system in both delay-limited and delay-tolerant transmission modes, respectively. This will be achieved by evaluating the outage probability and ergodic capacity of this network. We will also perform asymptotic analysis to obtain some insights on the effect of various system parameters. Moreover, we derive the expressions of the throughput-optimal energy harvesting time $\tau$ at high SNR for both transmission modes.

\subsection{Delay-Limited Transmission Mode}
\subsubsection{Exact Throughput Analysis}
In this transmission mode, the throughput can be obtained by evaluating the outage probability defined as the instantaneous channel capacity drops below the source's fixed transmission rate $R$ \cite{Nasir_TWC_2013}. Thus, the outage probability of the considered system with a given rate $R$ and energy harvesting time $\tau$ can be expressed as
\begin{equation}\label{eq:Pout}
\begin{split}
{P_{out}}(\tau) &= \Pr \left( {{{\log }_2}\left( {1 + {\gamma _A}} \right) < R} \right) %\\&
= \Pr \left( {{\gamma _A} < {\gamma _0}} \right)\\
& = 1 - \frac{2\sum\limits_{p = 0}^{{N} - 1} {\frac{{
\left( \sqrt{ {\frac{(1- \tau) \gamma _0}{{\tau \bar \gamma \Omega^2}}} } \right)^{{N+p}}
{K_{{N} - p}}\left( {2\sqrt {\frac{(1- \tau) \gamma _0}{{\tau \bar \gamma \Omega^2}}} } \right)}}{{p!}}}}{{({N} - 1)!}},
\end{split}
\end{equation}
where $\gamma _{0} = 2^{R}-1$ and $K_n\left(  \cdot  \right)$ is the modified Bessel function of the second kind with order $n$ \cite[Eq. (9.6.2)]{Abramowitz_book_1970}. The last equation is obtained by noting that the terms  ${{\left\| \textbf{h} \right\|}^2}$ and ${{\left\| \textbf{g} \right\|}^2}$ in $\gamma_A$ follow independent and identical chi-square distributions with $2N$ degree of freedom \cite{Louie_ICC_2008_Perf}, and applying the PDF of product of two chi-square distributed variables derived in \cite{Shin_CL_2003_Effect}.

Considering that the user transmits with a fixed rate $R$ and the effective communication time from the user to AP after the energy harvesting phase is $1-\tau$, the average throughout of the delay-limited transmission mode, $\rho_{lim}$, is given by
\begin{equation} \label{eq:throughput_delay_limited}
\begin{split}
&\rho_{lim}  = \left( {1 - {P_{out}}(\tau)} \right)R\left( {1 - \tau } \right)\\
&= \frac{2R\left( {1 - \tau } \right)}{{({N} - 1)!}}\sum\limits_{p = 0}^{{N} - 1} {\frac{{
\left( \sqrt{ {\frac{(1- \tau) \gamma_0}{{\tau \bar \gamma \Omega^2}}} } \right)^{{N+p}}
{K_{{N} - p}}\left( {2\sqrt {\frac{(1- \tau) \gamma_0}{{\tau \bar \gamma \Omega^2}}} } \right)}}{{p!}}},
\end{split}
\end{equation}
where the throughput $\rho_{lim}$ depends on $\tau$, $N$, $\gamma_0$, $\Omega$, and $\bar \gamma$. Although the expression (\ref{eq:throughput_delay_limited}) gives exact value of the system throughput and may not be computationally complicated to evaluate, it does not offer explicit insights into the impacts of the system parameters. Thus, we are motivated to look into the system performance for this transmission mode at high SNR, where simpler expressions can be obtained.

\subsubsection{Asymptotic Throughput Analysis}
To this end, we can apply the series representation of Bessel functions in (\ref{eq:throughput_delay_limited}). But, due to complex structure of (\ref{eq:throughput_delay_limited}), this method is shown to be not tractable mathematically. To address this, let us first derive a tight approximation for (\ref{eq:throughput_delay_limited}). Recall that the end-to-end SNR $\gamma_A$ in (\ref{eq3}) can be regarded as a product of two chi-square random variables. In addition, the chi-square distribution can be regarded as a special case of the generalized gamma distribution. Recently, Chen \emph{et al.} proposed a novel analytical framework for evaluating the statistics of the product of independent random variables in \cite{Chen_TVT_2012}, where a tight approximation to the CDF of the product of generalized gamma random variables was provided. By applying the results obtained in \cite{Chen_TVT_2012}, we can approximate the CDF of $\gamma_A$ as
\begin{equation}\label{}
{F_{{\gamma _A}}}\left( z \right) \approx \frac{{\gamma \left( {{m_0} + 2N - 2,\frac{{2{m_0}}}{{{\Omega _0}}}\sqrt {\frac{z}{{{\Omega ^2}}}} } \right)}}{{\Gamma \left( {{m_0} + 2N - 2} \right)}},
\end{equation}
where $\gamma \left( {a,x} \right) = \int_0^x {{e^{ - t}}{t^{a - 1}}dt}$ is the lower incomplete gamma function and $\Gamma \left( a \right) = \int_0^\infty  {{e^{ - t}}{t^{a - 1}}dt}$ is the complete gamma function. Besides, $m_0 =  1.6467$ and $\Omega_0 = 1.5709$ are heuristically obtained \cite{Chen_TVT_2012}.

Consequently, the average throughput of the delay-limited transmission mode can be approximately expressed as
\begin{equation}\label{eq:throughput_delay_limited_approx}
\rho_{lim}  \approx R\left( {1 - \tau } \right)\left[ {1 - \frac{{\gamma \left( {{m_0} + 2N - 2,\frac{{2{m_0}}}{{{\Omega _0}}}\sqrt {\frac{{(1 - \tau ){\gamma _0}}}{{\tau \bar \gamma {\Omega ^2}}}} } \right)}}{{\Gamma \left( {{m_0} + 2N - 2} \right)}}} \right].
\end{equation}
As shown by numerical results in Sec. \ref{sec:num}, the above approximation is very tight.

In high SNR regime (i.e., $\bar \gamma \rightarrow \infty$), the term ${\frac{{2{m_0}}}{{{\Omega _0}}}}{\sqrt {\frac{{(1 - \tau ){\gamma _0}}}{{\tau \bar \gamma {\Omega ^2}}}} }$ inside the incomplete game function approaches to $0$. By applying the asymptotic property of the incomplete gamma function near zero $\gamma \left( {a,x} \right) \approx {x^a}/a$, we further have
\begin{equation}\label{eq:throughput_delay_limited_asymp}
\rho_{lim}  \approx R\left( {1 - \tau } \right)\left[ {1 - \frac{{{{\left( {\frac{{2{m_0}}}{{{\Omega _0}}}\sqrt {\frac{{(1 - \tau ){\gamma _0}}}{{\tau \bar \gamma {\Omega ^2}}}} } \right)}^{{m_0} + 2N - 2}}}}{{\Gamma \left( {{m_0} + 2N - 1} \right)}}
} \right].
\end{equation}

\emph{From (\ref{eq:throughput_delay_limited_asymp}), we now can observe the impact of the system parameters on the throughput performance. Specifically, when any of the parameters $N$, $\Omega$ and $\bar \gamma$ increases, the ratio term inside the square brackets of (\ref{eq:throughput_delay_limited_asymp}) will decrease, thereby improving the system throughput. However, the system throughput cannot keep increasing because the value of the term inside the square brackets is limited by $1$ no matter how large are the values of $N$, $\Omega$ and $\bar \gamma$. In other words, the system throughput in the delay-limited transmission mode should converge to a ceiling value of $R(1-\tau)$ by increasing the parameters $N$, $\Omega$ and $\bar \gamma$. Moreover, the impact of the parameter $\tau$ is not so explicit as that of the aforementioned ones. Particularly, the value of the term inside the square brackets rises when the value of $\tau$ increases. But, this will lead to the decrease of the term $(1-\tau)$ at the same time. Thus, we claim that there should exist an optimal energy harvesting time, denoted by $\tau_{lim}^*$, that can achieve the maximal throughput of the delay-limited mode.}

Based on the approximate system throughput (\ref{eq:throughput_delay_limited_approx}), we can derive an approximate expression for the value of $\tau_{lim}^*$ in high SNR regime, which is given in the following proposition:
\begin{proposition}\label{prop:optimal_tau_limited}
The throughput-optimal energy harvesting time at high SNR in the delay-limited transmission mode can be approximately expressed as
\begin{equation}\label{eq:optimal_tau_delay_limited}
\tau_{lim}^* \approx \frac{1}{{1 + {{\left\{ {\frac{{A + 2}}{B}W\left( { - \frac{B}{{A + 2}}{{\left( {\frac{{2\Gamma \left( A \right)}}{{{B^A}}}} \right)}^{1/\left( {A + 2} \right)}}} \right)} \right\}}^2}}},
\end{equation}
where $W(x)$ is the Lambert W function, which is the solution of the equation $We^W = x$. Besides,
%$y$ is the root of the following equation
%\begin{equation}\label{}
%{y^A}\left( {1 + \frac{{A\left( {{y^2} + 1} \right)}}{2}{e^{ - By}}} \right) = \frac{{\Gamma \left( {A + 1} \right)}}{{{B^A}}}
%\end{equation}
$A = {m_0} + 2N - 2$ and $ B = \frac{{2{m_0}}}{{{\Omega _0}}}\sqrt {\frac{{{\gamma _0}}}{{\bar \gamma {\Omega ^2}}}} $ are defined for notation simplification.
\end{proposition}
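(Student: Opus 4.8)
The plan is to obtain $\tau_{lim}^*$ as the interior maximizer of the closed-form approximation (\ref{eq:throughput_delay_limited_approx}), i.e.\ by solving the first-order condition $d\rho_{lim}/d\tau=0$ and then invoking the high-SNR regime to reduce the resulting transcendental equation to one solvable in closed form. First I would remove the two appearances of $\tau$ (one in the prefactor $1-\tau$, one inside the incomplete gamma) through the change of variable $x=\sqrt{(1-\tau)/\tau}$, equivalently $\tau=1/(1+x^2)$, which is a strictly monotone map of $(0,1)$ onto $(0,\infty)$ so that stationary points correspond. Under this substitution the argument of the incomplete gamma collapses to $Bx$ and $1-\tau=x^2/(1+x^2)$, turning (\ref{eq:throughput_delay_limited_approx}) into the single-variable objective
\[ \rho_{lim}\approx \frac{R\,x^2}{1+x^2}\left[1-\frac{\gamma(A,Bx)}{\Gamma(A)}\right]. \]

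Next I would differentiate with respect to $x$, using the identity $\frac{d}{dx}\gamma(A,Bx)=B^A x^{A-1}e^{-Bx}$ (from $\frac{d}{du}\gamma(A,u)=u^{A-1}e^{-u}$) together with $\frac{d}{dx}\frac{x^2}{1+x^2}=\frac{2x}{(1+x^2)^2}$. Setting the derivative to zero and cancelling the common factor $x/[(1+x^2)\Gamma(A)]$ yields the exact stationarity condition
\[ \frac{2\left[\Gamma(A)-\gamma(A,Bx)\right]}{1+x^2}=B^A x^A e^{-Bx}. \]

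I would then specialize to $\bar\gamma\to\infty$. Since $B\to0$ and the maximizer scales like $x\sim B^{-A/(A+2)}$, the product $Bx\sim B^{2/(A+2)}\to0$ while $x\to\infty$; hence $\gamma(A,Bx)\to0$, $\Gamma(A)-\gamma(A,Bx)\to\Gamma(A)$ and $1+x^2\to x^2$, but the exponential $e^{-Bx}$ must be retained. This collapses the condition to
\[ x^{A+2}e^{-Bx}\approx \frac{2\Gamma(A)}{B^A}. \]
Taking the $(A+2)$-th root gives $x\,e^{-Bx/(A+2)}=\big(2\Gamma(A)/B^A\big)^{1/(A+2)}$, and writing $y=-Bx/(A+2)$ puts it in the canonical form $ye^y=-\frac{B}{A+2}\big(2\Gamma(A)/B^A\big)^{1/(A+2)}$, so that $x^2=\big\{\tfrac{A+2}{B}W(\cdot)\big\}^2$ with $W(\cdot)$ the Lambert $W$ of this argument. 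Because $x$ re-enters (\ref{eq:optimal_tau_delay_limited}) only through $x^2$, the sign of $W$ (negative on the relevant branch) is immaterial, and substituting $\tau_{lim}^*=1/(1+x^2)$ reproduces (\ref{eq:optimal_tau_delay_limited}).

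I expect the crux to be the asymptotic truncation: justifying that, at the optimizer, $Bx$ is small enough to drop $\gamma(A,Bx)$ and to replace $1+x^2$ by $x^2$, while the linear-in-$x$ exponential $e^{-Bx}$ is exactly the correction that must be kept --- it is this retained factor, rather than the crude leading-order balance $x^{A+2}\approx 2\Gamma(A)/B^A$, that produces the Lambert-$W$ structure. A secondary check is to confirm the $W$-argument lies in $(-1/e,0)$ so that a real branch exists, and that the stationary point is indeed the throughput maximum, e.g.\ via the sign of $\rho_{lim}'$ or the boundary behaviour as $x\to0$ and $x\to\infty$.
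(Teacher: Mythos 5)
Your proposal is correct and follows essentially the same route as the paper's proof: differentiate the approximate throughput (\ref{eq:throughput_delay_limited_approx}), pass to the variable $y=\sqrt{(1-\tau)/\tau}$, reduce the stationarity condition at high SNR to $y^{A+2}e^{-By}=2\Gamma(A)/B^{A}$ by discarding the lower-order terms, and invert via the Lambert $W$ function. The only differences are cosmetic (you substitute before differentiating, and you drop $\gamma(A,Bx)$ directly rather than first expanding it as $(Bx)^A/A$); your added self-consistency check that $Bx\sim B^{2/(A+2)}\to 0$ at the optimizer and your remark on the existence of a real $W$ branch are sensible refinements the paper omits.
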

\begin{proof}
See Appendix \ref{append:prop_optimal_tau_limited}.
\end{proof}

Note that the above expression for the optimal energy harvesting time is derived for high SNR regime. To obtain the optimal $\tau$ of any given SNR, we can calculate the first-order derivative of (\ref{eq:throughput_delay_limited}) with respect to $\tau$ and set it equal to zero. However, due to the complexity of (\ref{eq:throughput_delay_limited}), it is hard to achieve closed-form expressions for the roots of the obtained equation, which can only be calculated via numerical methods.

\subsection{Delay-Tolerant Transmission Mode}
\subsubsection{Exact Throughput Analysis}
%In the delay-tolerant transmission mode, the throughput is determined by the ergodic capacity, $C$ at the AP. Unlike the delay limited transmission scheme, where the source transmits at the fixed rate $R$ in order to meet some outage criteria, the source can transmit data at any rate less than or equal to the evaluated ergodic capacity, $C$ in the delay-tolerant transmission mode. Actually, the delay-tolerant transmission mode assumes that the code length is very large compared to the block time so that the code sees all the possible realizations of the channel during a code-word transmission and channel conditions average out. Thus, it is possible to achieve the ergodic capacity by transmitting at a rate equal to the ergodic capacity without any rate adaptation or requiring any knowledge about the channel state information at the user side \cite{29,Nasir_TWC_2013}.

In this subsection, we analyze the average throughput performance of the considered system with a delay-tolerant transmission mode. In this transmission mode, it is assumed that the AP can store the received information blocks in a buffer and tolerate the delay for decoding the stored signals together. Then, the average throughput can be determined by evaluating the ergodic capacity at the AP \cite{Nasir_TWC_2013}.

We first evaluate the ergodic capacity of the considered network for a given energy harvesting time $\tau$. According to the definition, the ergodic capacity of the system, denoted by $C\left(\tau\right)$, is given by
\begin{equation} \label{eq:capacity_def}
\begin{split}
C\left(\tau\right) &= \int_0^\infty  {{{\log }_2}\left( {1 + z} \right){f_{{\gamma _{_A}}}}\left( z \right)} dz\\
 &= \frac{2}{{(N - 1)!}\ln 2}\frac{{{{\left( {\frac{{1 - \tau }}{{\tau \bar \gamma {\Omega ^2}}}} \right)}^N}}}{{(N - 1)!}}\times \\
 &~~~\int_0^\infty  {{{\ln }}\left( {1 + z} \right){z^{N - 1}}{K_0}\left( {2\sqrt {\frac{{\left( {1 - \tau } \right)z}}{{\tau \bar \gamma {\Omega ^2}}}} } \right)} dz,
\end{split}
\end{equation}
where ${f_{\gamma_A}}\left( z \right)$ denotes the PDF of the random variable $\gamma_A$, which can be expressed as \cite{Shin_CL_2003_Effect}
\begin{equation} \label{eq:PDF_gamma_A}
\begin{split}
{f_{\gamma_A}}\left( z \right) = \frac{2{\left( {\frac{{1 - \tau }}{{\tau \bar \gamma \Omega^2}}} \right)^N}{z^{N - 1}}{K_0}\left( {2\sqrt {\frac{{\left( {1 - \tau } \right)z}}{{\tau \bar \gamma \Omega^2}}} } \right)}{{(N - 1)!}{(N - 1)!}}.
\end{split}
\end{equation}
Note that the ergodic capacity can also be calculated by using the CDF, which is, however, essentially the same with the adopted method using PDF.

Inspired by \cite{Matthaiou_conf_2011}, we can evaluate the integral term in (\ref{eq:capacity_def}) in a closed-form by applying the Meijer G-functions as follows. Firstly, we express the terms $\ln \left( {1 + x} \right)$ and ${x^a}{K_\nu }\left( x \right)$ in Meijer G-functions by using the equations given in \cite[Ch. 2]{Mathai_book_1973}.
%\begin{equation}\label{eq:meijer_rewriting_1}
%{x^a}{K_\nu }\left( x \right) = {2^{a - 1}}G_{0,2}^{2,0}\left( {\left. {\frac{{{x^2}}}{4}} \right|\frac{{a + \nu }}{2},\frac{{a - \nu }}{2}} \right),
%\end{equation}
%\begin{equation}\label{eq:meijer_rewriting_2}
%\ln \left( {1 + x} \right) = G_{2,2}^{1,2}\left( {x\left| {\begin{array}{*{20}{c}}
%{1,1}\\
%{1,0}
%\end{array}} \right.} \right),
%\end{equation}
Then, we can solve the integral and write (\ref{eq:capacity_def}) in a closed-form through the following calculation
\begin{equation}\label{eq:capacity_closed_form}
\begin{split}
&C\left(\tau\right)=\frac{{1 - \tau }}{{\tau \bar \gamma {\Omega ^2}}} \times\\
& \frac{{\int_0^\infty  {G_{2,2}^{1,2}\left( {z\left| {\begin{array}{*{20}{c}}
{1,1}\\
{1,0}
\end{array}} \right.} \right)G_{0,2}^{2,0}\left( {\left. {\frac{{\left( {1 - \tau } \right)z}}{{\tau \bar \gamma {\Omega ^2}}}} \right|N - 1,N - 1} \right)dx} }}{{(N - 1)!(N - 1)!\ln 2}}\\
&= \frac{{\frac{{1 - \tau }}{{\tau \bar \gamma {\Omega ^2}}}G_{2,4}^{4,1}\left( {\frac{{1 - \tau }}{{\tau \bar \gamma {\Omega ^2}}}\left| {\begin{array}{*{20}{c}}
{ - 1,0}\\
{ - 1, - 1,N - 1,N - 1}
\end{array}} \right.} \right)}}{{(N - 1)!(N - 1)!\ln 2}},
\end{split}
\end{equation}
where $G_{p,q}^{m,n}\left( {x\left| {\begin{array}{*{20}{c}}
   {{a_1}, \ldots ,{a_p}}  \\
   {{b_1}, \ldots ,{b_q}}  \\
\end{array}} \right.} \right)$ is the Meijer G-function \cite[Eq. (9.301)]{Gradshteyn_book_2000}, which is a
standard built-in function in most of the available mathematical software packages, such as MATLAB, MAPLE and MATHEMATICA. Besides, the integral of Meijer G-function is solved based on \cite[Eq. (7.811)]{Gradshteyn_book_2000}. Now, we can obtain the average throughput of the considered system, denoted by $\rho_{tol}$, by calculating the product of the ergodic capacity $C(\tau)$ and the effective time duration for information transmission of $1-\tau$. That is,
\begin{equation} \label{eq15}
\rho_{tol}  = \left( {1 - \tau } \right)C\left(\tau\right),
\end{equation}
where the average throughput $\rho_{tol}$ depends on $\tau$, $N$, $\Omega$ and $\bar \gamma $. However, due to the complexity of the Meijer G-functions, it is hard to observe the specific relationships between the throughput and the aforementioned parameters from (\ref{eq15}).

%From (\ref{eq15}), we can see that $\tau$ plays two opposite roles in system throughput. On the one side, increasing $\tau$ will improve the received SNR at the AP since the SNR is proportional to $\tau/(1-\tau)$, thereby increasing the system throughput. On the other side, increasing $\tau$ will reduce the effective duration of information transmission from the user to the AP, $(1-\tau)$, thus decreasing the system throughput. Thus, we deduce that there exists an optimal value for the energy harvesting time, $\tau^*$, that can maximize the system throughput. Again, due to the Merjer G-function, we cannot derive a closed-form expression for $\tau^*$ based on the exact average throughput given in (\ref{eq15}). Motivated by these, we will perform asymptotic analysis for the system throughput at high SNR to gain more insightful results and find a closed-form expression for $\tau^*$.
\subsubsection{Asymptotic Throughput Analysis}
To gain more insight, next let us derive the asymptotic throughput. For sufficiently high SNR, we can approximate ${\log _2}\left( {1 + z } \right)$ in the integral of (\ref{eq:capacity_def}) as ${\log _2}\left( z \right)$. The asymptotic throughput can be calculated without the need of Meijer G-function as
\begin{equation}\label{eq:asymptotic_throughput_DT}
\begin{split}
{\rho_{tol} } &\approx \frac{{2\left( {1 - \tau } \right)}}{{(N - 1)!}}\frac{{{{\left( {\frac{{1 - \tau }}{{\tau \bar \gamma {\Omega ^2}}}} \right)}^N}}}{{(N - 1)!\ln 2}}\times\\
&~~~\int_0^\infty  {{z^{N - 1}}\ln z\;{K_0}\left( {2\sqrt {\frac{{\left( {1 - \tau } \right)z}}{{\tau \bar \gamma {\Omega ^2}}}} } \right)} dz \\
&= \frac{{2\left( {1 - \tau } \right)}}{{(N - 1)!}}\frac{{{{\left( {\frac{{1 - \tau }}{{\tau \bar \gamma {\Omega ^2}}}} \right)}^N}}}{{(N - 1)!\ln 2}}\times \\
&~~~\int_0^\infty  {{t^{2N - 2}}2\ln t\;{K_0}\left( {2\sqrt {\frac{{\left( {1 - \tau } \right)}}{{\tau \bar \gamma {\Omega ^2}}}} t} \right)} 2tdt \\
&= \frac{{1 - \tau }}{{\ln 2}}\left[ {2\psi \left( N \right) + \ln {\bar \gamma} + 2\ln {\Omega } - \ln \left( {\frac{{1 - \tau }}{{\tau }}} \right)} \right],
\end{split}
\end{equation}
where \cite[Eq. (2.16.20.1)]{Prudnikov_book_1986} is used to solve the integral, $\psi \left(  \cdot  \right)$  is the Euler Psi function \cite[Eq. (8.36)]{Gradshteyn_book_2000}. \emph{From (\ref{eq:asymptotic_throughput_DT}), we can see that for a given value of $\tau$, the system throughput is proportional to the Psi function of the number of antennas $N$ at high SNR. This indicates that the system throughput increases with the rising of the number of antennas but the increasing rate is gradually decreasing, according to the properties of the Psi function. The average throughput is also proportional to the logarithm functions of the parameters $\bar \gamma$ and $\Omega$ in a high SNR regime. In addition, similar to the case in the delay-limited mode, we can clearly observe from (\ref{eq:asymptotic_throughput_DT}) that the energy harvesting time $\tau$ plays two opposite roles in the system throughput. Specifically, increasing the value of $\tau$ will decrease the value of the term outside the square brackets but increase the value of the term inside the square brackets. Thus, we also deduce that there exists an optimal value for the energy harvesting time, denoted by $\tau_{tol}^*$, that can maximize the system throughput.}

Based on (\ref{eq:asymptotic_throughput_DT}), we can obtain a closed-form expression for the optimal energy harvesting time $\tau_{tol}^*$ at high SNR for the delay-tolerant transmission mode, given in the following proposition:
\begin{proposition}\label{prop:optimal_tau_tolerant}
The throughput-optimal energy harvesting time at high SNR in the delay-tolerant transmission mode can be expressed as
\begin{equation}\label{eq:optimal_tau}
{\tau_{tol} ^*} {\approx} \frac{1}{{1 + W\left( {{\bar \gamma {\Omega ^2} e^{2\psi \left( N \right) - 1}}} \right)}}.
\end{equation}
\end{proposition}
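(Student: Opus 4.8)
The plan is to maximize the asymptotic throughput expression (\ref{eq:asymptotic_throughput_DT}) directly over $\tau \in (0,1)$ via a first-order optimality condition, and then to recognize the resulting transcendental stationarity equation as one solvable in closed form through the Lambert $W$ function. Since the factor $1/\ln 2$ is a positive constant, it suffices to maximize
\[
g(\tau) = (1-\tau)\left[ C_0 - \ln\frac{1-\tau}{\tau} \right],
\]
where I abbreviate $C_0 = 2\psi(N) + \ln\bar\gamma + 2\ln\Omega$ to collapse all the $\tau$-independent terms. The $\approx$ in (\ref{eq:optimal_tau}) then simply inherits the high-SNR approximation already built into (\ref{eq:asymptotic_throughput_DT}); the maximization itself is carried out exactly.

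First I would differentiate. Writing $\ln\frac{1-\tau}{\tau} = \ln(1-\tau) - \ln\tau$ and applying the product rule, the term produced by differentiating $-\ln(1-\tau)$ cancels against the contribution $(1-\tau)\cdot\frac{1}{1-\tau}=1$, so the derivative collapses to
\[
g'(\tau) = -C_0 + \ln\frac{1-\tau}{\tau} + \frac{1}{\tau}.
\]
Setting $g'(\tau)=0$ yields the stationarity condition $\frac{1}{\tau} + \ln\frac{1-\tau}{\tau} = C_0$.

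The crucial step is to recast this into Lambert-$W$ form through the substitution $u = \frac{1-\tau}{\tau}$, which gives $\frac{1}{\tau}=u+1$. The stationarity condition then reads $u + \ln u = C_0 - 1$, and exponentiating both sides produces $u\,e^{u} = e^{C_0-1}$. Using $e^{C_0-1} = \bar\gamma\Omega^2 e^{2\psi(N)-1}$, this is precisely the defining equation $We^W = x$ of the Lambert $W$ function, so $u = W\!\left(\bar\gamma\Omega^2 e^{2\psi(N)-1}\right)$. Back-substituting $\tau = 1/(u+1)$ then reproduces (\ref{eq:optimal_tau}) exactly.

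To close the argument I would verify that this stationary point is the global maximizer on $(0,1)$. Differentiating once more gives $g''(\tau) = -\frac{1}{1-\tau} - \frac{1}{\tau} - \frac{1}{\tau^2}$, which is strictly negative throughout the open interval; hence $g$ is strictly concave and the unique root of $g'$ is its maximizer, confirming the claim and the earlier heuristic that an optimal $\tau$ exists. The main obstacle is spotting the substitution $u=(1-\tau)/\tau$ that compresses the mixed algebraic-logarithmic stationarity equation into the canonical $u e^u = \mathrm{const}$ form; once that is seen, the identification with $W$ and the simplification of $e^{C_0-1}$ are routine.
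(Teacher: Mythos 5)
Your proposal is correct and follows essentially the same route as the paper's proof: differentiate the asymptotic throughput, set the derivative to zero, and recognize the stationarity condition as a Lambert-$W$ equation in the variable $(1-\tau)/\tau$ (your explicit substitution $u=(1-\tau)/\tau$ is exactly what the paper does implicitly when it rewrites $e^{-1/\tau}=e^{-1}e^{(\tau-1)/\tau}$). Your added second-derivative check confirming strict concavity is a small but worthwhile supplement the paper omits; the only blemish is that your verbal description of which terms cancel in $g'(\tau)$ is slightly garbled, though the resulting expression $g'(\tau)=-C_0+\ln\frac{1-\tau}{\tau}+\frac{1}{\tau}$ is right.
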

\begin{proof}
See Appendix \ref{append:prop_optimal_tau_tolerant}.
\end{proof}
It is worth emphasizing that the above equation for the throughput-optimal energy harvesting time is actually a high-SNR approximate solution as it is obtained based on the high-SNR approximation in (\ref{eq:asymptotic_throughput_DT}).

\emph{From (\ref{eq:optimal_tau}), we can see that the value of the optimal energy harvesting time $\tau^*$ is inversely proportional to the parameters $\bar \gamma$, $\Omega$ and $N$, since the Lambert W function $W(x)$ is a monotonically increasing function for $x \ge 0$. This observation will be verified by simulation results in next section.}

\section{Simulation Results}\label{sec:num}
In this section, we provide simulation results to validate the derived analytical results. We also investigate the effects of the transmit power of the AP $P$, the number of antennas $N$, and the energy harvesting time $\tau$, on the system throughput. To capture the effect of path-loss on the network performance, we use the channel model that $\Omega = 10^{-3}\left({d_{AU}}\right)^{ - \alpha }$, where $d_{AU}$ is the distance between the AP and the user, and $\alpha \in [2,5]$ is the path-loss factor \cite{Chen_SPL_2010}. Note that a 30dB average signal power attenuation is assumed at a reference distance of $1$m in the above channel model \cite{Ju_TWC_2014}. In all following simulations, we set the distance between the AP and user $d_{AU} = 10$m, the path-loss exponent $\alpha$ = 2, the noise power $\sigma^2 = -80$dBm, the energy harvesting efficiency $\eta = 0.5$, and the transmission rate of the user $R = 2$ in the delay-limited mode.
\begin{figure*}
\centering
 \subfigure[Delay-limited mode]
  {\scalebox{0.5}{\includegraphics {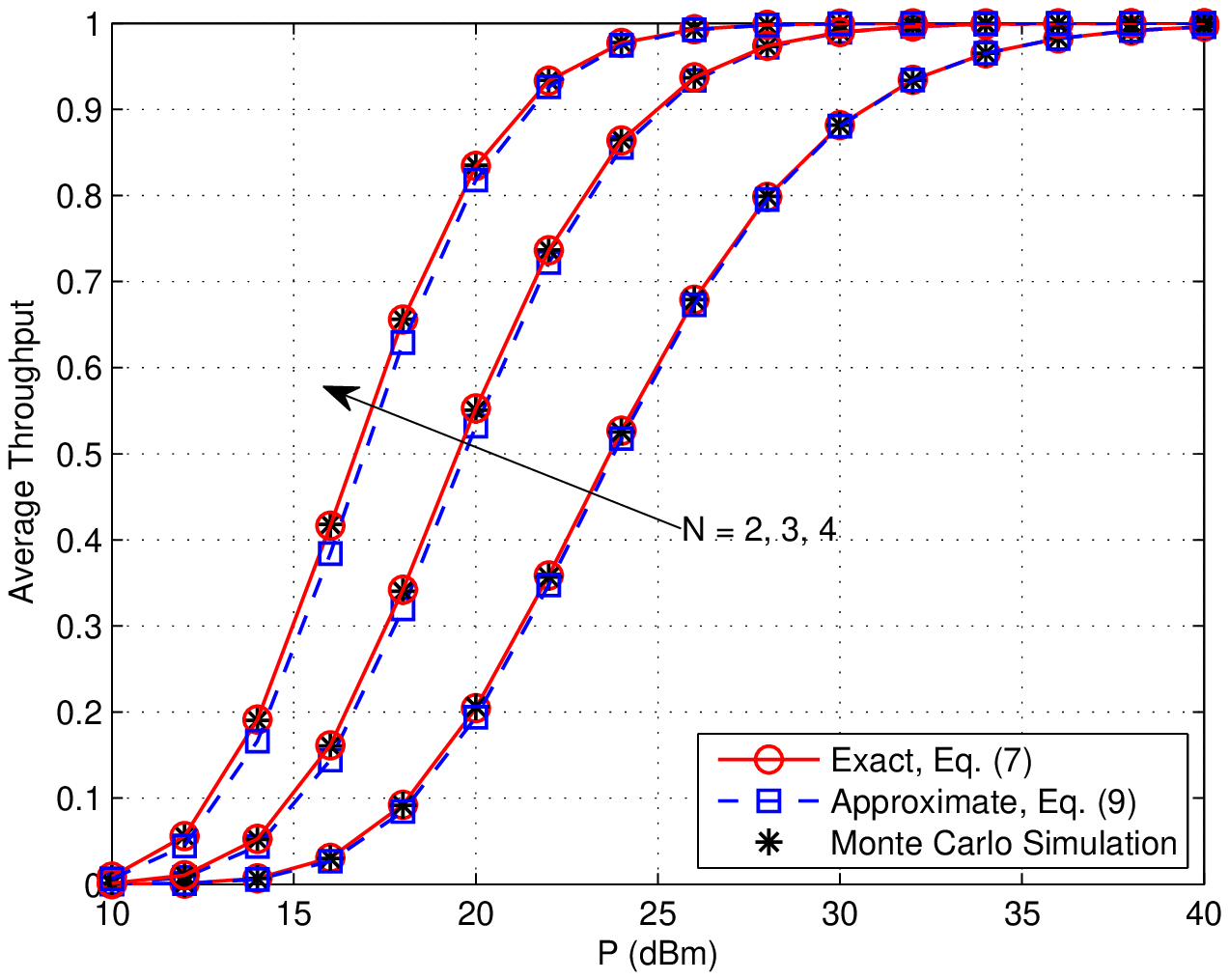}
  \label{fig:throughput_SNR_1}}}
\hfil
 \subfigure[Delay-tolerant mode]
  {\scalebox{0.5}{\includegraphics {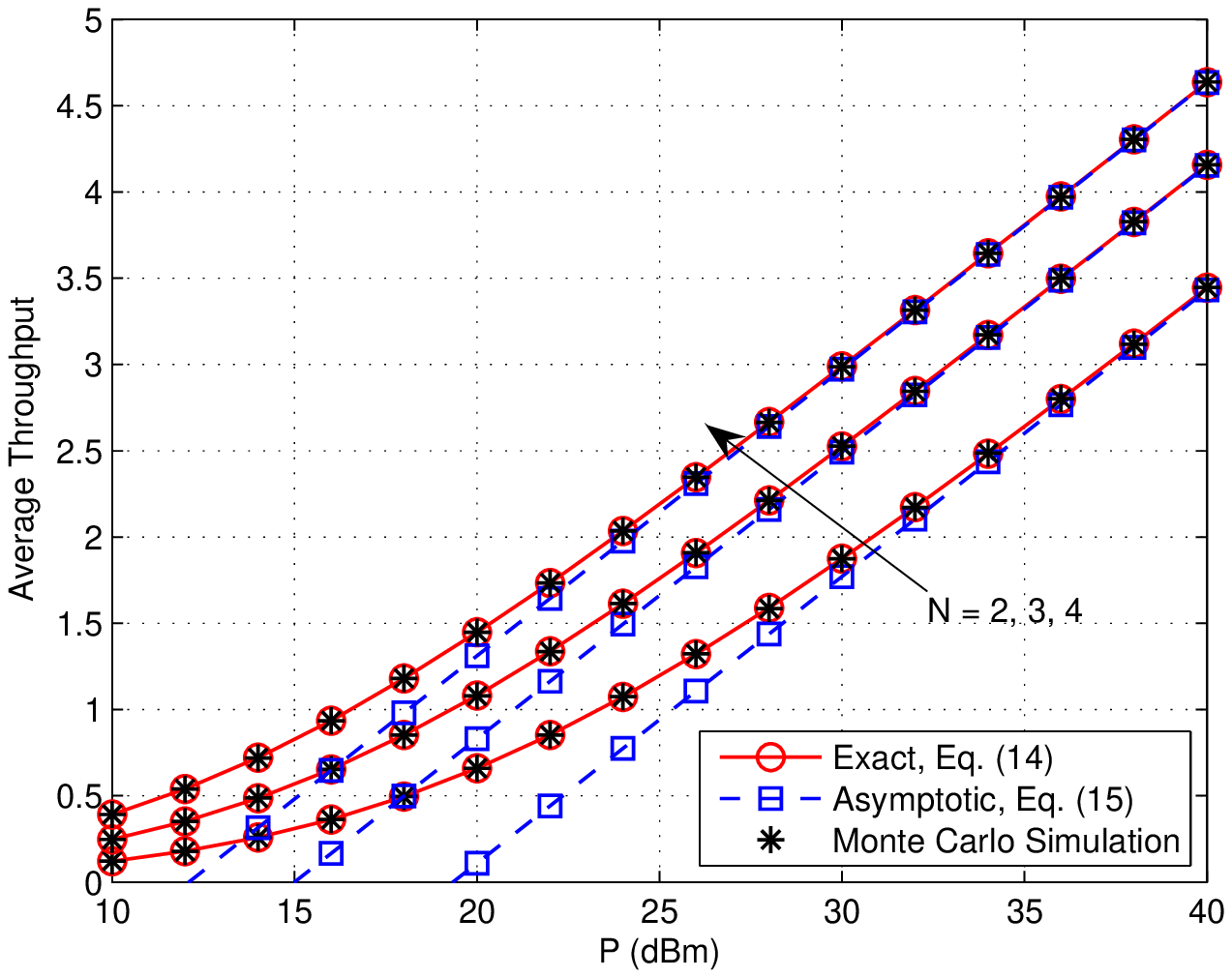}
  \label{fig:throughput_SNR_2}}}
\caption{Average throughput versus the transmit power of the AP for (a) delay-limited transmission mode and (b) delay-tolerant transmission mode with different number of antennas, where $\tau = 0.5$.}
\label{fig:throughput_SNR}
\end{figure*}
%\begin{figure}
%\centering
%\scalebox{0.8}{\includegraphics{throughput_SNR.eps}}
%\caption{Average throughput versus the transmit power of the AP for different number of antennas, where $\tau = 0.5$.}
%\label{fig:throughput_SNR}
%\end{figure}

Fig. \ref{fig:throughput_SNR} plots the average throughput curves of the considered network versus the transmit power of the AP for both delay-limited and delay-tolerant modes with various numbers of antennas at the AP. We can see from Fig. \ref{fig:throughput_SNR} that the derived exact analytical results tightly coincide with the corresponding simulated ones, which verified the correctness of the derived closed-form expressions for the average throughput given in (\ref{eq:throughput_delay_limited}) and (\ref{eq15}). Besides, it can be observed from Fig. \ref{fig:throughput_SNR} (a) that the approximate throughput of the delay-limited mode calculated via (\ref{eq:throughput_delay_limited_approx}) can estimate the exact one very well, which confirms the effectiveness of the approximation. Moreover, for the delay-tolerant mode, the asymptotic throughput quickly approaches the exact one as the AP transmit power increases in both transmission modes, although there are gaps between them when the transmit power is low. This observation validates our asymptotic analysis in %(\ref{eq:throughput_delay_limited_asymp}) and
(\ref{eq:asymptotic_throughput_DT}). It can also be observed from this figure that the system throughput in both transmission modes is improved as either the transmit power or the number of antennas increases. However, the trends for the increasing of the system throughput are different in two transmission modes. Specifically, in the delay-limited transmission mode, the system throughput tends to be saturated when the transmit power of the AP is high enough. In addition, the larger the number of antennas at the AP, the lower the value of the transmit power from which the saturation of the system throughput starts. In contrast, the system throughput grows monotonically as the transmit power of the AP increases. These observations are actually consistent with our theoretical analyses discussed in Sec. \ref{sec:performance_analysis}.

%\begin{figure*}
%\centering
% \subfigure[Delay-limited mode]
%  {\scalebox{0.5}{\includegraphics {throughput_N_delay_limited.eps}
%  \label{fig:throughput_N_1}}}
%\hfil
% \subfigure[Delay-tolerant mode]
%  {\scalebox{0.5}{\includegraphics {throughput_N_delay_tolerant.eps}
%  \label{fig:throughput_N_2}}}
%\caption{Average throughput versus the number of antennas at the AP for (a) delay-limited transmission mode and (b) delay-tolerant transmission mode with different values of $\tau$, where $P = 30$ (dBm).}
%\label{fig:throughput_N}
%\end{figure*}

\begin{figure}
\centering
\scalebox{0.55}{\includegraphics{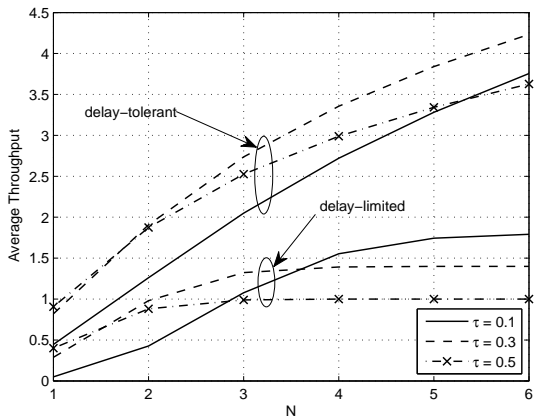}}
\caption{Average throughput versus the number of antennas at the AP for both delay-limited transmission mode and delay-tolerant transmission mode with different values of $\tau$, where $P = 30$ (dBm).}
\label{fig:throughput_N}
\end{figure}
Fig. \ref{fig:throughput_N} illustrates the effect of the number of antennas at the AP on the system throughput, in which the throughput curves of both transmission modes are plotted versus the number of antennas for different values of the energy harvesting time $\tau$. As we expected, the system throughput in delay-limited mode first increases and then converges to a ceiling as the number of antennas at the AP increases. This is because that the system throughput in this delay-limited mode is limited by the value of $R(1-\tau)$. On the other hand, the system throughput increases accordingly when the number of antennas becomes larger in the delay-tolerant mode. However, the slope of the throughput curves gradually decreases, which coincides with our theoretical analysis because the average throughput is proportional to the Psi function of the number of antennas. Moreover, we can see from Fig. \ref{fig:throughput_N} that the impact of the energy harvesting time $\tau$ on the system throughput of both transmission modes is not so clearly as that of the number of antennas does. For example, the throughput performance in the delay-tolerant mode is improved when the value of $\tau$ is increased to $0.3$ from $0.1$. However, as $\tau$ increases from $0.3$ to $0.5$, the system throughput for $N \ge 2$ significantly decreases.
\begin{figure*}
\centering
 \subfigure[Delay-limited mode]
  {\scalebox{0.5}{\includegraphics {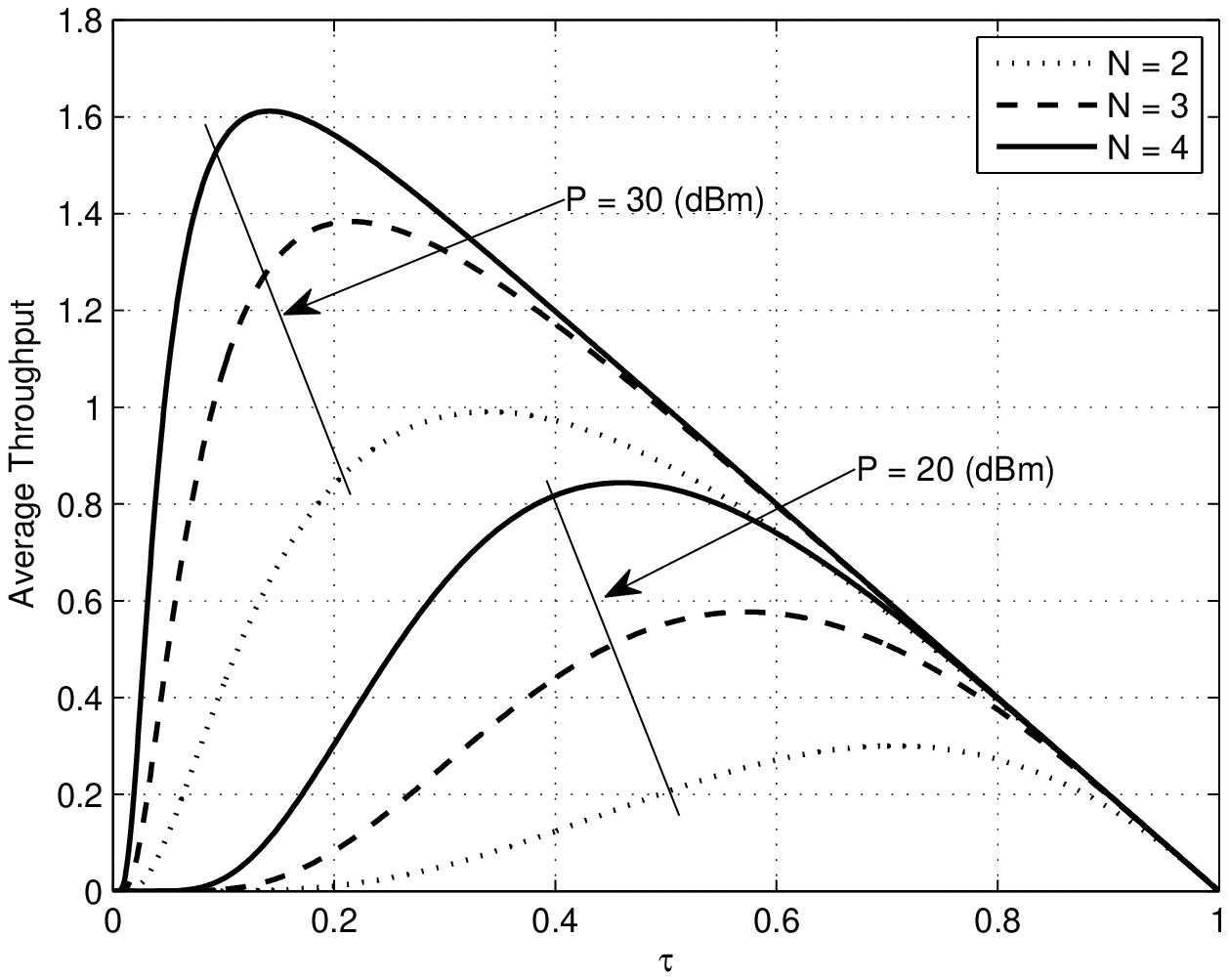}
  \label{fig:throughput_SNR_tau_1}}}
\hfil
 \subfigure[Delay-tolerant mode]
  {\scalebox{0.5}{\includegraphics {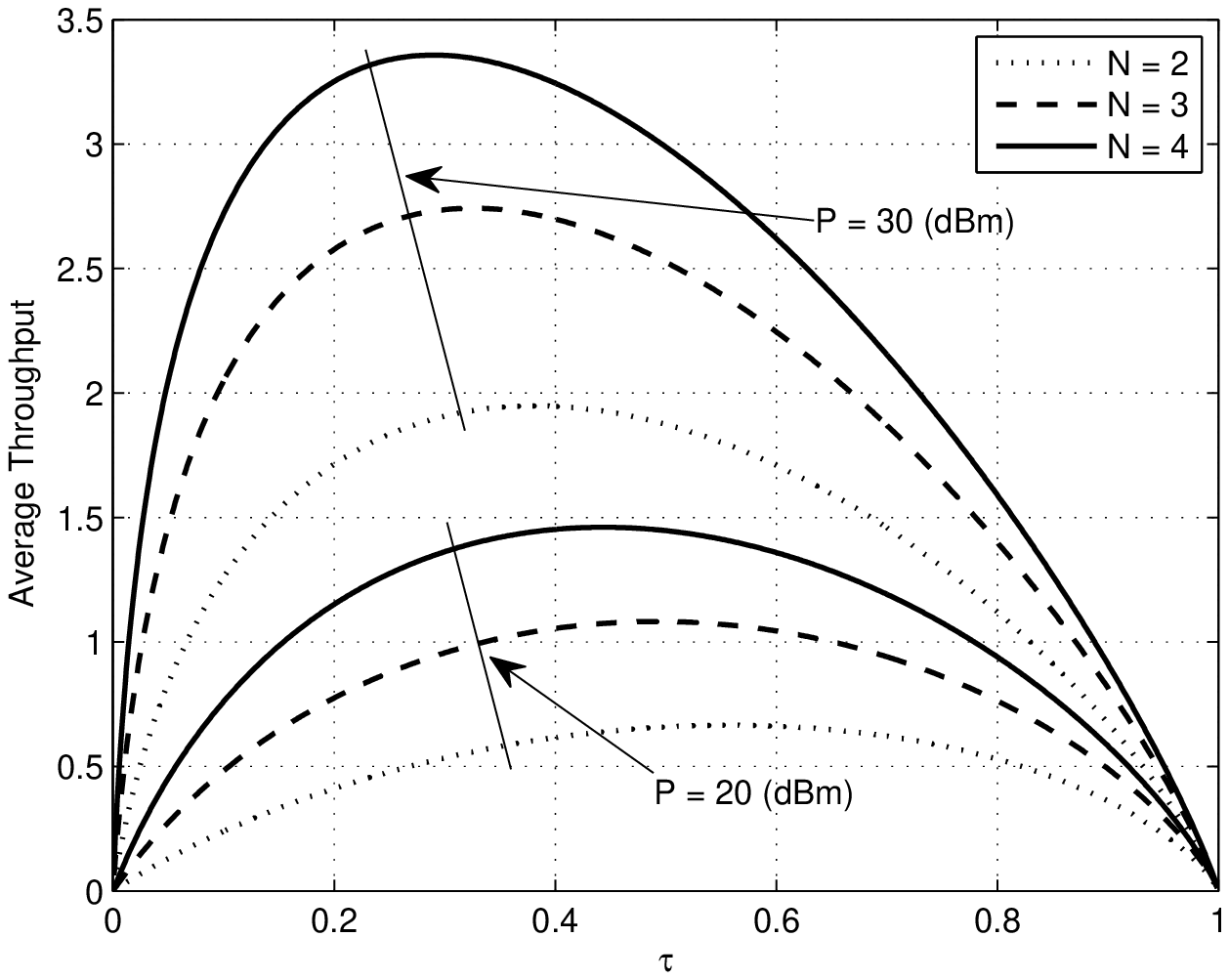}
  \label{fig:throughput_SNR_tau_2}}}
\caption{Average throughput versus the energy harvesting time $\tau$ for (a) delay-limited transmission mode and (b) delay-tolerant transmission mode with different values of $N$ and $P$.}
\label{fig:throughput_SNR_tau}
\end{figure*}

To clearly demonstrate the impact of the energy harvesting time $\tau$, we plot the throughput curves versus $\tau$ for both transmission modes with variable numbers of antennas and transmit powers of the AP in Fig. \ref{fig:throughput_SNR_tau}. It can be observed from this figure that there exists a throughput-optimal energy harvesting time for all the considered cases in both transmission modes, which coincides with our analytical results. Recall that we also derive the closed-form expressions for the optimal energy harvesting time at high SNR for both transmission modes in Sec. \ref{sec:performance_analysis}. To validate these two derived expressions given in (\ref{eq:optimal_tau_delay_limited}) and (\ref{eq:optimal_tau}), we compare them with the corresponding optimal energy harvesting time found via exhaustive search in Fig. \ref{fig:optimal_tau}. We can see from Fig. \ref{fig:optimal_tau} that the values of the optimal $\tau$, calculated by (\ref{eq:optimal_tau_delay_limited}) and (\ref{eq:optimal_tau}), coincide with the corresponding ones obtained via exhaustive search when the transmit power of the AP is high enough (i.e., at high SNRs), which verifies the correctness of our derivations of these two equations. In addition, as expected, the values of the optimal $\tau$ decrease as either the number of antennas or the transmit power increases. This is because the user can harvest the same amount of energy in a shorter time when either the number of antennas or the transmit power increases, and more time could be allocated to the UL information transmission to improve the system throughput.

\begin{figure*}
\centering
 \subfigure[Delay-limited mode]
  {\scalebox{0.5}{\includegraphics {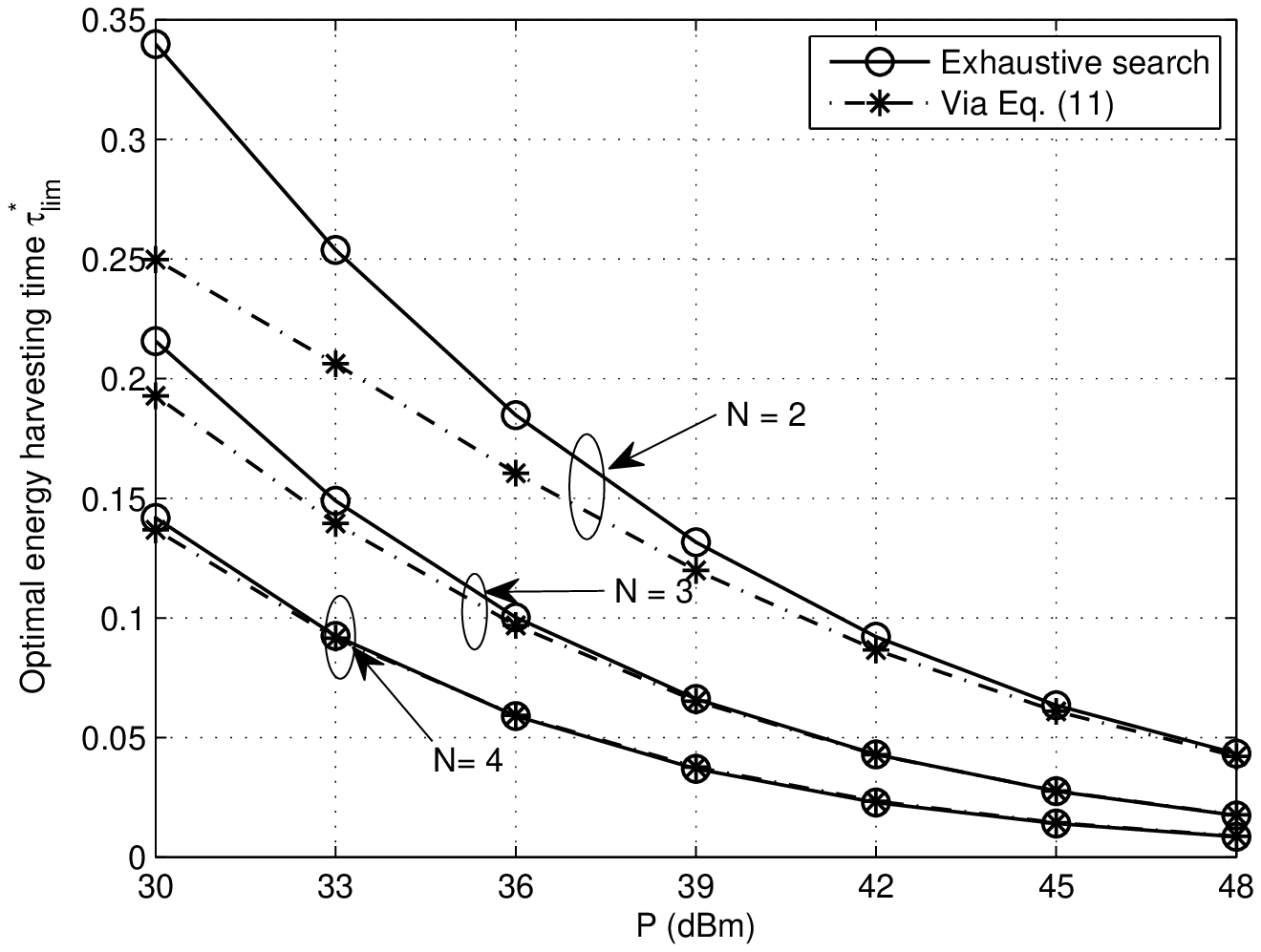}
  \label{fig:optimal_tau_1}}}
\hfil
 \subfigure[Delay-tolerant mode]
  {\scalebox{0.5}{\includegraphics {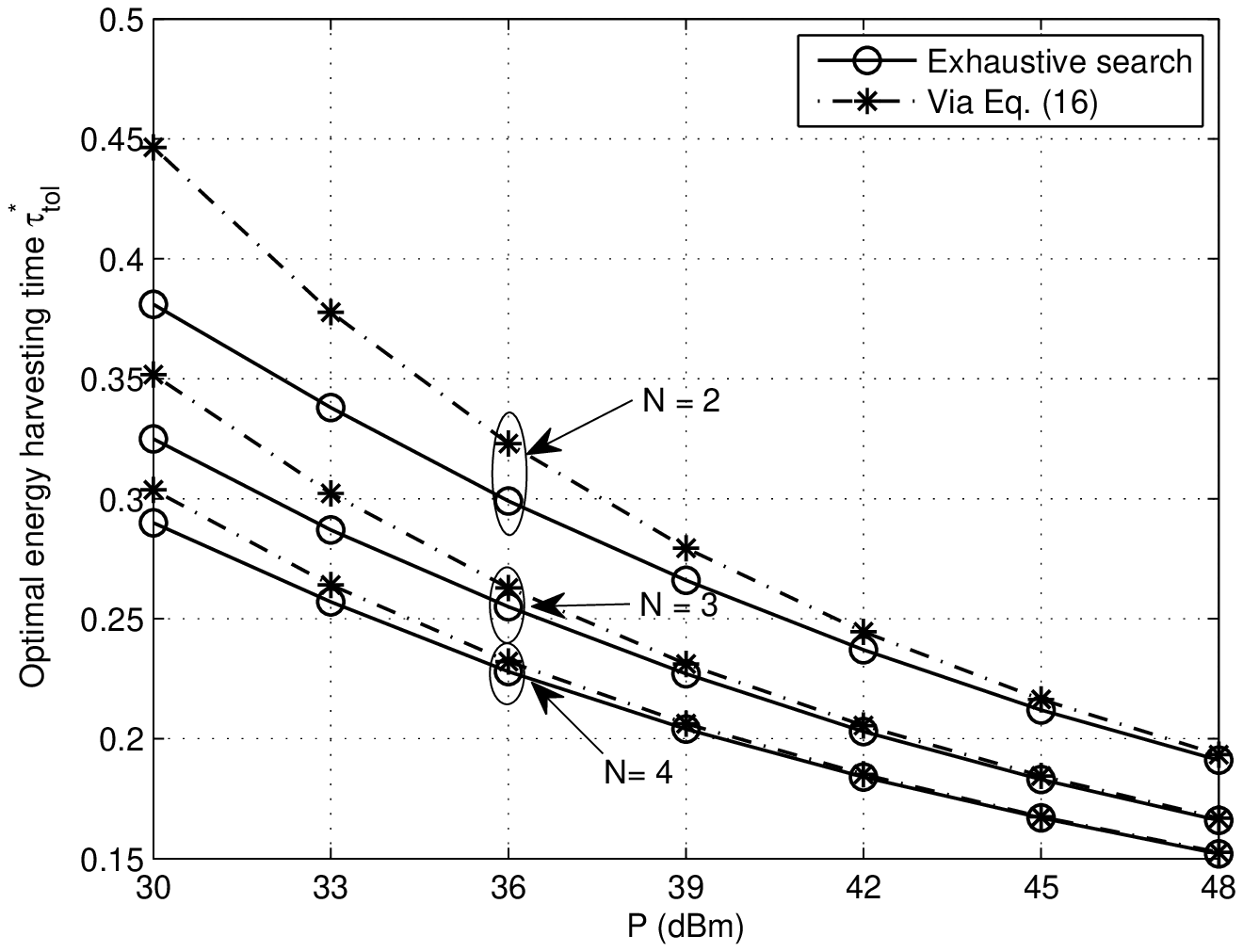}
  \label{fig:optimal_tau_2}}}
\caption{The optimal energy harvesting time versus the transmit power of the AP for (a) delay-limited transmission mode and (b) delay-tolerant transmission mode with different values of $N$.}
\label{fig:optimal_tau}
\end{figure*}

\section{Conclusion}
In this paper, we derived exact and asymptotic closed-form expressions for the average throughput of a multiple-antenna wireless-powered communication network with energy beamforming in both delay-limited and delay-tolerant transmission modes. The correctness and effectiveness of these theoretical analysis were verified by simulation results. Numerical results showed that the system throughput in the delay-limited mode first improves as either the number of antennas or the transmit power of the AP increases and then get saturated when these two parameters are large enough. In contrast, the system throughput in the delay-tolerant mode is always proportional to the number of antennas and the transmit power of the AP. Based on the asymptotic analysis, we also derived closed-form expressions of the throughput-optimal energy harvesting time at high signal-to-noise ratio (SNRs) for both transmission modes. Numerical simulations showed that the value of the optimal energy harvesting time is inversely proportional to the number of antennas and the transmit power of the AP. The optimal energy harvesting time calculated by the derived expressions can coincide well with the one obtained via exhaustive search when the SNR is high enough. Furthermore, the larger the number of antennas, the more accurate the optimal energy harvesting time calculated by the derived expressions.

\ifCLASSOPTIONcaptionsoff
  \newpage
\fi
\begin{appendix}

\subsection{Proof of Proposition \ref{prop:optimal_tau_limited}}\label{append:prop_optimal_tau_limited}
The approximate expression of the system throughput given in (\ref{eq:throughput_delay_limited_approx}) instead of the exact one is adopted to calculate the optimal energy harvesting time $\tau_{lim}^*$ since it is more tractable\footnote{Originally, the asymptotic expression (\ref{eq:throughput_delay_limited_asymp}) should be used to find the optimal energy harvesting time. However, due to the special polynomial structure of (\ref{eq:throughput_delay_limited_asymp}), we found it is hard to obtain a closed-form expression for the optimal energy harvesting time from (\ref{eq:throughput_delay_limited_asymp}). Thus, we choose to use (\ref{eq:throughput_delay_limited_approx}) instead.}. To proceed, we calculate the first-order derivative of (\ref{eq:throughput_delay_limited_approx}) with respect to $\tau$ and set it equal to $0$. After some algebraic manipulations, we have
\begin{equation}\label{eq:derivative_tau_delay_limited}
\gamma \left( {A,B\sqrt {\frac{{1 - \tau }}{\tau }} } \right) + \frac{1}{{2\tau }}{\left[ {B\sqrt {\frac{{1 - \tau }}{\tau }} } \right]^A}{e^{ - B\sqrt {\frac{{1 - \tau }}{\tau }} }} = \Gamma \left( A \right),
\end{equation}
where $A = {m_0} + 2N - 2$ and $ B = \frac{{2{m_0}}}{{{\Omega _0}}}\sqrt {\frac{{{\gamma _0}}}{{\bar \gamma {\Omega ^2}}}} $. For high SNR, the term $B$ approaches to 0. By applying the asymptotic property of the incomplete gamma function near zero $\gamma \left( {a,x} \right) \approx {x^a}/a$ once again and performing variable substitution $y = \sqrt{\frac{1-\tau}{\tau}}$, we can simplify (\ref{eq:derivative_tau_delay_limited}) as
\begin{equation}\label{eq:derivative_tau_delay_limited_1}
{y^A}\left( {1 + \frac{{A\left( {{y^2} + 1} \right)}}{2}{e^{ - By}}} \right) = \frac{{A\Gamma \left( A \right)}}{{{B^A}}}.
\end{equation}

Furthermore, in high SNR regime, only a small fraction of each transmission block is needed for DL energy transfer and the majority of time should be allocated to the UL information transmission. This means that $y = \sqrt{\frac{1-\tau}{\tau}}\gg 1$ should hold. In this case, we can ignore the two ``1" terms on the left-hand side of (\ref{eq:derivative_tau_delay_limited_1}) and obtain the following equation
\begin{equation}\label{eq:derivative_tau_delay_limited_2}
{y^{A + 2}}{e^{ - By}} = \frac{{2\Gamma \left( A \right)}}{{{B^A}}}.
\end{equation}
By equaling the $\{A+2\}$th roots of both sides in (\ref{eq:derivative_tau_delay_limited_2}), we have
\begin{equation}\label{}
\begin{split}
 &y{e^{ - \frac{B}{{A + 2}}y}} = {\left( {\frac{{2\Gamma \left( A \right)}}{{{B^A}}}} \right)^{\frac{1}{{A + 2}}}} \\
  \Leftrightarrow  &- \frac{B}{{A + 2}}y{e^{ - \frac{B}{{A + 2}}y}} =  - \frac{B}{{A + 2}}{\left( {\frac{{2\Gamma \left( A \right)}}{{{B^A}}}} \right)^{\frac{1}{{A + 2}}}} \\
  \Leftrightarrow  &- \frac{B}{{A + 2}}y = W\left( { - \frac{B}{{A + 2}}{{\left( {\frac{{2\Gamma \left( A \right)}}{{{B^A}}}} \right)}^{\frac{1}{{A + 2}}}}} \right).
\end{split}
\end{equation}
Substituting the relationship $y = \sqrt{\frac{1-\tau}{\tau}}$ into the last equation and rearranging it, we can obtain the desired results given in (\ref{eq:optimal_tau_delay_limited}).

\subsection{Proof of Proposition \ref{prop:optimal_tau_tolerant}}\label{append:prop_optimal_tau_tolerant}
To find the optimal energy harvesting time $\tau_{tol}^*$, we calculate the first-order derivative of the right-hand side of (\ref{eq:asymptotic_throughput_DT}) with respect to $\tau$ and set it to zero. After some algebraic manipulations, we obtain the following equation:
\begin{equation}\label{}
2\psi \left( N \right) + \ln \left( {\bar \gamma {\Omega ^2}} \right) - \frac{1}{\tau } = \ln \frac{{1 - \tau }}{\tau }.
\end{equation}
By performing an exponential operation on both sides of the above equation, we have
\begin{equation}\label{}
\begin{split}
 &\bar \gamma {\Omega ^2}{e^{2\psi \left( N \right)}}{e^{ - \frac{1}{\tau }}} = \frac{{1 - \tau }}{\tau } ,\\
  \Leftrightarrow\; &\bar \gamma {\Omega ^2}{e^{2\psi \left( N \right) - 1}}{e^{\frac{{\tau  - 1}}{\tau }}} = \frac{{1 - \tau }}{\tau }, \\
  \Leftrightarrow\; &\bar \gamma {\Omega ^2}{e^{2\psi \left( N \right) - 1}} = \frac{{1 - \tau }}{\tau }{e^{\frac{{1 - \tau }}{\tau }}}, \\
  \Leftrightarrow\;& \frac{{1 - \tau }}{\tau } = W\left( {\bar \gamma {\Omega ^2}{e^{2\psi \left( N \right) - 1}}} \right), \\
  \Leftrightarrow\;& \tau  = \frac{1}{{1 + W\left( {\bar \gamma {\Omega ^2}{e^{2\psi \left( N \right) - 1}}} \right)}}, \\
  \end{split}
\end{equation}
which completes the proof.
\end{appendix}


\begin{thebibliography}{1}
\bibitem{Ozel_JSAC_2011}
O. Ozel, K. Tutuncuoglu, J. Yang, S. Ulukus, and A. Yener, ``Transmission with Energy Harvesting Nodes in Fading Wireless Channels: Optimal Policies," IEEE J. Sel. Areas Commun., vol.29, no.8, pp.1732-1743, Sept. 2011.

%\bibitem{Shinohara_2011}
%N. Shinohara, ``Power without wires," IEEE Microw. Mag., vol. 12, pp. S64-S73, December 2011.

\bibitem{Visser_POI_2013}
H. J. Visser, and R. J. Vullers, ``RF energy harvesting and transport for wireless sensor network applications: Principles and requirements," Proceedings of the IEEE, vol. 101, no. 6, pp. 1410-1423, June 2013.

\bibitem{Lu_2014_Resource}
X. Lu, P. Wang, D. Niyato, and Z. Han. ``Resource Allocation in Wireless Networks with RF Energy Harvesting and Transfer." accepted in IEEE Network, 2014.

\bibitem{Liu_2013_SIGCOMM}
V. Liu, A. Parks, V. Talla, S. Gollakota, D. Wetherall, and J. R. Smith, ``Ambient backscatte: wireless communication out of thin air," in proc. ACM SIGCOMM, pp. 1-13, Aug. 2013.

\bibitem{powercast}
Powercast Corporation, ``TX91501 users manual and P2110s datasheet," http://www.powercastco.com.

\bibitem{Zhang_ICC_Tutorial}
R. Zhang, ``Wireless Powered Communication: Opportunities and Challenges," IEEE ICC 2014-Tutorial, available online at http://icc2014.ieee-icc.org/2014/private/Tutorial12.pdf.

\bibitem{Smith_book_wirelss}
J. R. Smith, Wirelessly Powered Sensor Networks and Computational
RFID. Springer, 2013.

\bibitem{Chen_TVT_2014_Wireless}
X. Chen, C. Yuen, and Z. Zhang, ``Wireless energy and information transfer tradeoff for limited feedback multi-antenna systems
with energy beamforming," IEEE Trans. Veh. Technol., vol. 63, pp. 407-412, Jan. 2014.

\bibitem{Chen_WCL_2014_Energy}
X. Chen, X. Wang, and X. Chen, ``Energy efficient optimization for wireless information and power transfer in large MIMO
employing energy beamforming," IEEE Wireless Commun. Lett., vol. 2, pp. 667-670, Dec. 2013.

\bibitem{Ju_TWC_2014}
H. Ju, and R. Zhang, ``Throughput maximization in wireless powered
communication networks," IEEE Trans. Wireless Commun., vol. 13,
no. 1, pp. 418-428, Jan. 2014.

\bibitem{Liu_arXiv_Multi}
L. Liu, R. Zhang, and K. C. Chua, ``Multi-antenna wireless powered communication with energy beamforming," accepted to appear in IEEE Trans. Commun., available online at arxiv.org/abs/1312.1450.



%\bibitem{Ju_arXiv_Optimal}
%H. Ju and R. Zhang, ``Optimal resource allocation in full-duplex wireless-powered
%communication network," arXiv:1403.2580, 2014.
%
%\bibitem{Kang_arXiv_Full}
%X. Kang, C. K. Ho, and S. Sun, ``Full-duplex wireless-powered communication
%network with energy causality," arXiv:1404.0471, 2014.
%
%\bibitem{Yang_arXiv_Throughput}
%G. Yang, C. K. Ho, R. Zhang, and Y. L. Guan, ``Throughput optimization
%for massive mimo systems powered by wireless energy transfer," arXiv:1403.3991, 2014.

\bibitem{Nasir_TWC_2013}
A. A. Nasir, X. Zhou, S. Durrani, and R. A. Kennedy, ``Relaying
protocols for wireless energy harvesting and information processing,"
IEEE Trans. Wireless Commun., vol. 12, no. 7, pp. 3622-3636, Jul. 2013.

\bibitem{Michalopoulos_arXiv_2013}
D. S. Michalopoulos, H. A. Suraweera, and R. Schober. ``Relay Selection for Simultaneous Information Transmission and Wireless Energy Transfer: A Tradeoff Perspective," arXiv:1303.1647, 2013.

\bibitem{Ding_TWC_2014}
Z. Ding, S. M. Perlaza, I. Esnaola, and H. V. Poor, ``Power allocation
strategies in energy harvesting wireless cooperative networks," IEEE
Trans. Wireless Commun., vol. 13, no. 2, pp. 846-860, Aug. 2014.

\bibitem{Morsi_arXiv_2014}
R. Morsi, D. S. Michalopoulos, and R. Schober, ``Multi-user Scheduling Schemes for Simultaneous Wireless Information and Power Transfer Over Heterogeneous Fading Channels," arXiv:1401.1943, 2014.

\bibitem{Chen_TSP_2015}
H. Chen, Y. Li, J. L. Rebelatto, B. F. Uchoa-Filhoand, and B. Vucetic,
``Harvest-then-cooperate: Wireless-powered cooperative communications,"
IEEE Trans. Signal Process., vol. 63, no. 7, pp. 1700-1711, April 2015.

\bibitem{Ju_arXiv_2014}
H. Ju and R. Zhang, ``User cooperation in wireless powered communication
networks," avaliable online: http://arxiv.org/abs/1403.7123, 2014.

\bibitem{He_ITW_2014}
H. Chen, X. Zhou, Y. Li, P. Wang, and B. Vucetic, ``Wireless-powered
cooperative communications via a hybrid relay," in 2014 IEEE Information
Theory Workshop (ITW), 2014, pp. 666-670.

\bibitem{Gu_ICC_2015}
Y. Gu, H. Chen, Y. Li, and B. Vucetic, ``An Adaptive Transmission Protocol for Wireless-Powered Cooperative Communications", accepted to appear in Proc. of ICC'15, London, UK, 8-12 June, 2015.

\bibitem{Liu_TWC_2013_Wireless}
L. Liu, R. Zhang, and K.-C. Chua, ``Wireless Information Transfer with Opportunistic Energy Harvesting," IEEE Trans. Wireless Commun., vol.12, no.1, pp.288-300, Jan. 2013.

\bibitem{Louie_ICC_2008_Perf}
R. H. Y. Louie, Y. Li, and B. Vucetic, ``Performance Analysis of Beamforming in Two Hop Amplify and Forward Relay Networks," in Proc. ICC'08, pp.4311-4315, 2008.

\bibitem{Shin_CL_2003_Effect}
H. Shin, and J. H. Lee, ``Effect of keyholes on the symbol error rate of
space¨Ctime block codes," IEEE Commun. Lett., vol. 7, no. 1, pp. 27¨C29, Jan. 2003.

\bibitem{Gradshteyn_book_2000}
I. S. Gradshteyn, and I. M. Ryzhik, Table of Integrals, Series, and Products, 6 ed. New York: Academic, 2000.

\bibitem{Chen_TVT_2012}
Y. Chen, G. K. Karagiannidis, H. Lu, and N. Cao, ``Novel Approximations to the Statistics of Products of Independent Random Variables and Their Applications in Wireless Communications," IEEE Trans. Vehicular Tech., vol.61, no.2, pp.443-454, Feb. 2012.

\bibitem{Abramowitz_book_1970}
M. Abramowitz and I. A. Stegun, Handbook of Mathematical Functions with Formulas, Graphs, and Mathematical Tables, 9th ed. New York:
Dover Publications, 1970.


\bibitem{Matthaiou_conf_2011}
M. Matthaiou, N. D. Chatzidiamantis, H. A. Suraweera, and G. K. Karagiannidis, "Performance analysis of space-time block codes over generalized-K fading MIMO channels," in Proc. of 2011 IEEE Swedish in Communication Technologies Workshop (Swe-CTW), Stockholm, Sweden, Oct. 2011, pp. 68-73.

%\bibitem{29}
%E. Biglieri, J. Proakis, and S. Shamai, ``Fading channels: information-theoretic
%and communications aspects," IEEE Trans. Inf. Theory, vol. 44, no. 6, pp. 2619-2692, Oct. 1998.

\bibitem{Mathai_book_1973}
A. M. Mathai, and R. K. Saxena, Generalized hypergeometric functions with applications in statistics and physical sciences, Springer, 1973.

\bibitem{Prudnikov_book_1986}
A. P. Prudnikov, I. A. Brychkov, and A. P. Marichev, Integrals and series: special functions (vol. 2), CRC Press, 1986.

\bibitem{Chen_SPL_2010}
H. Chen, J. Liu, L. Zheng, C. Zhai, and Y. Zhou, ``Approximate SEP analysis for DF cooperative networks with opportunistic relaying," IEEE Signal Process. Lett., vol. 17, no. 9, pp. 779-782, Sept. 2010.
%\bibitem{Beals_AMS_2013}
%R. Beals and J. Szmigielski. ``Meijer G-Functions: A Gentle Introduction." Notices of the AMS 60, no. 7, 2013.
\bibitem{Zeng_arXiv_2014_Zeng}
Y. Zeng and R. Zhang, ``Optimized training for wireless energy transfer," submitted to IEEE Trans. Commun., available online at http://arxiv.org/abs/1403.7870.

\bibitem{Zhang_conf_2014}
J. Zhang, C. Yuen, and C.-K. Wen, ``Large-system analysis of ergodic sum-rate in wireless-powered MIMO communication network," SECON - Energy Harvesting Communication Workshop 2014.
\end{thebibliography}
\end{document}